\documentclass[conference,10pt]{IEEEtran}
\usepackage{epsfig,rotating,setspace,latexsym,amsmath,epsf,amssymb,amsfonts,bm,subfigure,epstopdf, amsthm}

\usepackage{amsbsy}
\usepackage{cite,authblk}
\usepackage{bbm}
\usepackage{color, xcolor}
\usepackage{mathtools}
\usepackage{algorithm}
\usepackage{textcomp}
\usepackage[noend]{algpseudocode}
\usepackage{verbatim}
\usepackage{graphicx, graphics}
\usepackage{afterpage}
\usepackage{placeins}
\usepackage{bm}
\usepackage{authblk}
\PassOptionsToPackage{bottom}{footmisc}
\PassOptionsToPackage{symbol}{footmisc}
\usepackage{footmisc}
\usepackage{multirow}
\usepackage{booktabs}
\usepackage{float}
\usepackage{afterpage}

\setlength{\skip\footins}{6pt} 

\DeclareMathOperator*{\argmax}{arg\,max}

\algrenewcommand\algorithmicforall{\textbf{foreach}}
\algrenewcommand\algorithmicindent{.8em}

\DeclarePairedDelimiter{\ceil}{\lceil}{\rceil}
\newtheorem{theorem}{Theorem}
\newtheorem{lemma}{Lemma}

\allowdisplaybreaks

\raggedbottom

\usepackage{graphicx}
\usepackage{epsfig}
\usepackage{epstopdf}
\epstopdfsetup{outdir=./}


\begin{document}

\title{Neural Beamforming with Doppler-Aware Sparse Attention for High Mobility Environments }

\author{
Cemil Vahapoglu$^{1,2}$, Timothy J. O’Shea$^{2}$, Wan Liu$^{2}$, Sennur Ulukus$^{1}$ \\
\normalsize $^{1}$University of Maryland, College Park, MD, $^{2}$DeepSig Inc., Arlington, VA\\
\normalsize \emph{cemilnv@umd.edu, tim@deepsig.ai, wliu@deepsig.ai, ulukus@umd.edu}}

\maketitle

\begin{abstract}

    Beamforming has significance for enhancing spectral efficiency and mitigating interference in multi-antenna wireless systems, facilitating spatial multiplexing and diversity in dense and high mobility scenarios. Traditional beamforming techniques such as zero-forcing beamforming (ZFBF) and minimum mean square error (MMSE) beamforming experience significant performance deterioration under adverse channel conditions. Deep learning-based beamforming offers an alternative with nonlinear mappings from channel state information (CSI) to beamforming weights by improving robustness against dynamic channel environments. Transformer-based models are particularly effective due to their ability to model long-range dependencies across time and frequency. However, their quadratic attention complexity limits scalability in large OFDM grids. Recent studies address this issue through sparse attention mechanisms that reduce complexity while maintaining expressiveness, yet often employ patterns that disregard channel dynamics, as they are not specifically designed for wireless communication scenarios. In this work, we propose a Doppler-aware Sparse Neural Network Beamforming (Doppler-aware Sparse NNBF) model that incorporates a channel-adaptive sparse attention mechanism in a multi-user single-input multiple-output (MU-SIMO) setting. The proposed sparsity structure is configurable along 2D time-frequency axes based on channel dynamics and is theoretically proven to ensure full connectivity within $p$ hops, where $p$ is the number of attention heads. Simulation results under urban macro (UMa) channel conditions show that Doppler-aware Sparse NNBF significantly outperforms both a fixed-pattern baseline, referred to as Standard Sparse NNBF, and conventional beamforming techniques ZFBF and MMSE beamforming in high mobility scenarios, while maintaining structured sparsity with a controlled number of attended keys per query.
\end{abstract}

\section{Introduction}

Beamforming is a fundamental technique in multi-antenna wireless communication systems, employed to optimize transmission and reception patterns for improved spectral efficiency and interference mitigation. In multiple input multiple output (MIMO) systems, beamforming enables spatial multiplexing and diversity gains, thereby enhancing data rates and ensuring robust communication in dense deployment and high mobility environments. To remain effective under practical considerations, beamforming strategies are expected to be adaptive to time-varying channel conditions while suppressing inter-user interference (ISI).

Traditional beamforming strategies such as zero-forcing beamforming (ZFBF) and minimum mean square error (MMSE) beamforming rely on linear algebraic formulations utilizing available channel state information (CSI). Despite providing closed-form solutions, these methods exhibit substantial performance degradation under adverse channel conditions, such as Doppler effect or rapidly varying channel conditions, where channel estimation errors and temporal decorrelation diminish their reliability \cite{delay_performance_multiuserMISO_downlink_imperfectCSI, performance_analysis_zf_receiver_imperfectCSI}. Moreover, their computational complexity scales cubically with the number of user equipments (UEs), introducing a gap between theoretical feasibility and practical deployment \cite{erpek2020deep}.

Deep learning-based beamforming methods have gained significant attention as alternatives to traditional beamforming approaches in multi-user MIMO systems. They enable beamforming designs that directly learn complex nonlinear mappings from imperfect CSI to beamforming weights by utilizing data-driven models. Such approaches go beyond conventional solutions by capturing rich channel dynamics and adapting to diverse propagation environments, including those affected by estimation errors, hardware impairments, and temporal variabilities. Recent studies show that deep learning-based beamforming approaches can match or surpass classical techniques while offering greater flexibility in dynamic wireless scenarios \cite{Sun2018, vahapoglu2023deep, WenchaoMISODownlinkBF,DeepTx2022}.

Among deep learning architectures, transformer-based models have demonstrated remarkable success in capturing long-range dependencies across sequences. In the context of beamforming, transformers are particularly useful due to their ability to capture spatio-temporal patterns across frequency and time domains, making them well-suited for dynamic network environments such as high mobility urban macro (UMa) scenarios. Furthermore, attention mechanisms within transformers offer interpretable and adaptive feature selection, which can be utilized to mitigate ISI. Yet, a significant challenge associated with transformer-based architectures is their quadratic complexity with respect to sequence length, which limits their scalability to large OFDM grids. Recent studies on sparse attention mechanisms tackle this issue by limiting the number of attention connections for each query, facilitating efficient inference while preserving model expressiveness. Examples include strided, local sliding window, and random sparse attention patterns, which have been employed in natural language processing and computer vision \cite{child2019,longformer2020, zaheer2020bigbird}.

In this work, we propose a Doppler-aware sparse attention mechanism for neural beamforming in a multi-user single input multiple output (MU-SIMO) system setting. The proposed mechanism tailors sparsity along the 2D time-frequency axes, corresponding to embedding representation over OFDM grids. This sparsity structure is adjustable based on the temporal characteristics of the channel,  allowing the model to capture both short-range and long-range dependencies more effectively. We integrate the mechanism into our transformer-based neural network beamforming model, referred to as \textit{Doppler-aware Sparse NNBF}, and train it to maximize the average sum-rate under varying UE mobility conditions. Furthermore, we provide a theoretical proof that the proposed sparse attention structure guarantees full connectivity within $p$ hops, where $p$ is the number of attention heads. Our simulations using the UMa channel model demonstrate that Doppler-aware Sparse NNBF outperforms NNBF with standard strided attention \cite{child2019}, denoted as \textit{Standard Sparse NNBF}, as well as conventional baseline techniques under high mobility scenarios. We also empirically validate that the proposed mechanism maintains a controlled number of attended keys per query, confirming its effectiveness in enforcing structured sparsity.

\section{System Model and Problem Formulation}

\subsection{Uplink Multi-User SIMO (MU-SIMO) Setup}
We consider an uplink transmission scenario in which $N$ single antenna UEs send data streams to a base station (BS) equipped with $M$ receive antennas as shown in Fig.~\ref{System Model}.

\begin{figure}[t]
    \centerline{\includegraphics[width= 1\linewidth]{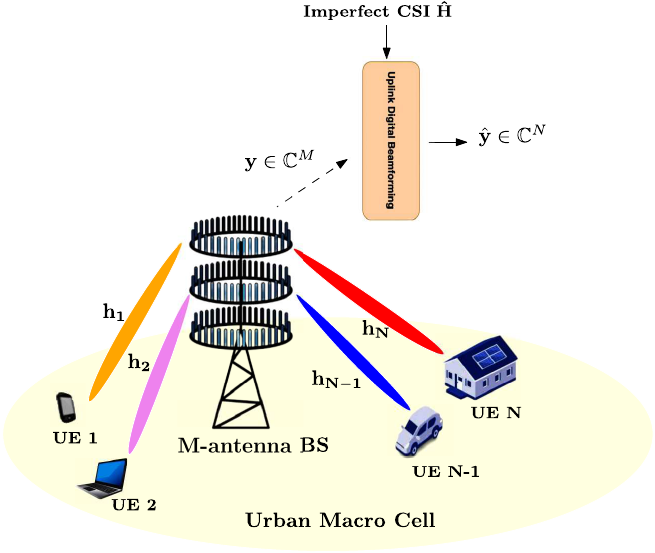}}
    \caption{Uplink multi-user SIMO system in a dense urban environment, where single-antenna UEs transmit data streams on the same time/frequency resources and the $M$-antenna BS applies digital beamforming on the received signal $\mathbf{y}$.}
    \label{System Model}
\end{figure}

The uplink channel matrix $\mathbf{H} = 
[\mathbf{h}_1  ~ \mathbf{h}_2 ~ \cdots ~ \mathbf{h}_N] \in \mathbb{C}^{M \times N}$, where $\mathbf{h}_k$ denotes the channel vector between UE $k$ and the BS. The received signal $\mathbf{y}$ can be expressed as
\begin{align} \label{received_signal}
    \mathbf{y} = \sum_{i=1}^N \mathbf{h}_i x_i +\mathbf{n}. 
\end{align}

It is presumed that ULPI-B is implemented, wherein uplink channel estimation and uplink beamforming is placed within the radio unit (RU), while the distributed unit (DU) is accountable for both uplink channel estimation and uplink equalization \cite{oranwg4}. Therefore, the uplink channel estimate $\mathbf{\hat{H}} = 
[\mathbf{\hat{h}}_1  ~ \mathbf{\hat{h}}_2 ~ \cdots ~ \mathbf{\hat{h}}_N] \in \mathbb{C}^{M \times N}$ is calculated to facilitate beamforming design directly within the RU, ensuring that all the necessary uplink processing tasks for beamforming, including the beamforming design itself, are efficiently managed locally within the RU.

The received signal in (\ref{received_signal}) is processed using beamforming weights $\mathbf{W} = [\mathbf{w}_1  ~ \mathbf{w}_2 ~ \cdots ~ \mathbf{w}_N ] \in \mathbb{C}^{M \times N}$ to retrieve data symbols while power consumption of the beamforming weights are checked to satisfy $\mathbf{w_k}^H\mathbf{w_k} \leq 1, \forall k=1, \ldots, N$. Specifically, $\mathbf{w}_k \in \mathbb{C}^{M}$ serves as the linear beamforming filter to estimate the transmitted data symbol of UE $k$, aiming to maximize throughput while mitigating the interference from other users
\begin{align}\label{received_uek}
    \mathbf{w}_k^T \mathbf{y} & = \sum_{i=1}^N \mathbf{w}_k^T\mathbf{h}_i x_i +\mathbf{w}_k^T\mathbf{n}.
\end{align}

\subsection{Beamforming Design for Sum-Rate Maximization Problem}
Our objective is to design beamforming weights that maximize the sum-rate across all UEs. The received signal for UE $k$ after applying beamformer $\mathbf{w_k}$ is
\begin{align} \label{received_uek_v2}
    \hat{y}_k &= \mathbf{w}_k^T \mathbf{y} \nonumber \\ 
     &= \underbrace{\mathbf{w}_k^T\mathbf{h}_k x_k}_{desired \, signal} + \underbrace{ \sum_{i=1, i\neq k}^N \mathbf{w}_k^T\mathbf{h}_i x_i}_{interfering \, signal} +\underbrace{\mathbf{w}_k^T\mathbf{n}}_{noise}.
\end{align}

The corresponding signal-to-interference-plus-noise ratio (SINR) for UE $k$ is
\begin{align}\label{sinr}
    \gamma_k = \frac{|\mathbf{w}_k^T \mathbf{h}_k|^2}{\sum_{i=1, i\neq k}^N |\mathbf{w}_k^T\mathbf{h}_i|^2 + \mathbb{E}|\mathbf{w}_k^T \mathbf{n}|^2}.
\end{align}

Then, the sum-rate maximization problem is 
\begin{align}\label{sum-rate maximization problem}
    \mathbf{W}^* = \argmax_{\mathbf{W}} & \quad \sum_{i=1}^N \alpha_i \log(1 + \gamma_i) \nonumber \\
    \textrm{s.t.} &\quad \textrm{tr}(\mathbf{W}^H\mathbf{W}) \leq N,
\end{align}
where $\alpha_i$ are trainable UE-specific weighting factors with $\sum_{i=1}^N \alpha_i = 1$.

\section{Deep Neural Network (DNN) Architecture}

\begin{figure*}[t]
    \centerline{\includegraphics[width=1\linewidth]{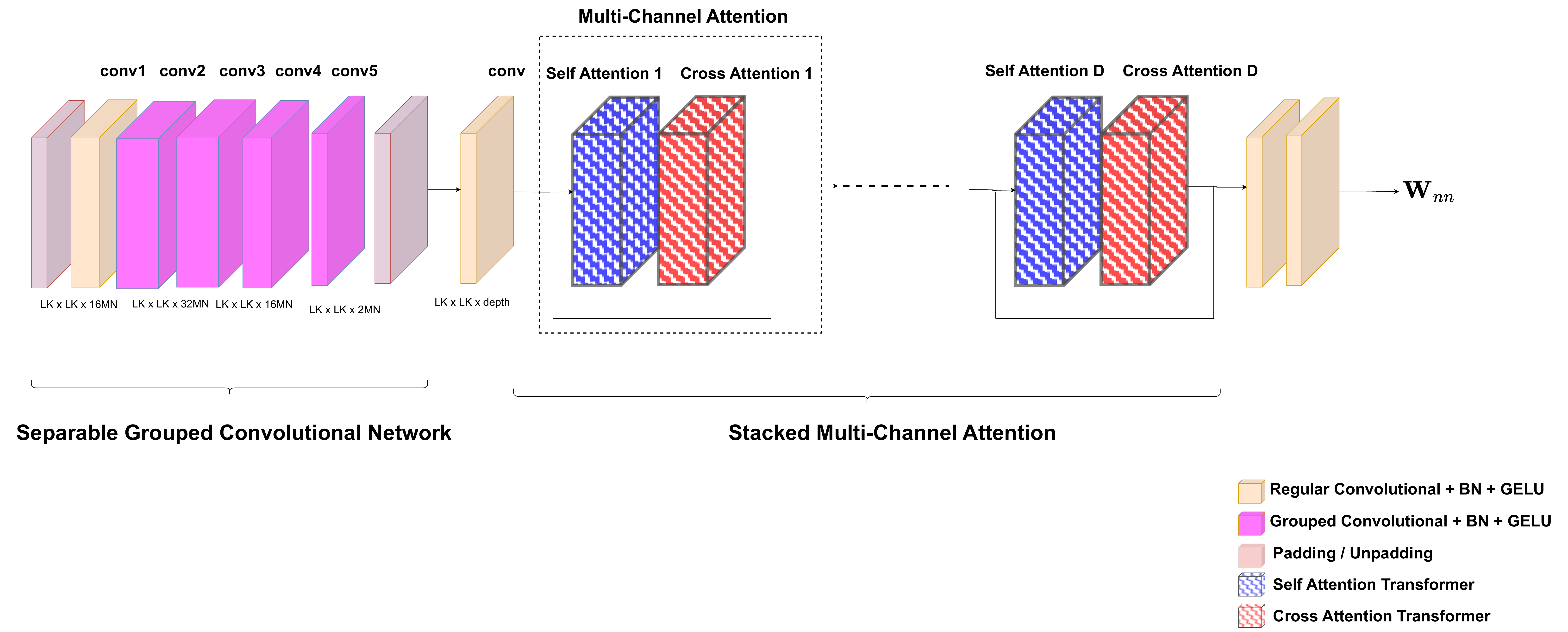}}
    \caption{Deep neural network architecture.}
    \label{DNN_architecture}
\end{figure*}

We present our DNN architecture designed to address the sum-rate maximization problem in (\ref{sum-rate maximization problem}) by learning beamforming weights $\mathbf{W}$ from the imperfect channel estimate $\hat{\mathbf{H}}$ in the frequency domain. The network takes the IQ samples of $\hat{\mathbf{H}}$ as input and outputs $\mathbf{W}$ as specified in the system model. In this context, $B$ denotes the batch size, while $L$ and $K$ represent the number of OFDM symbols and subcarriers, respectively.

\subsection{Overall Model Structure}
The DNN architecture consists of two main components: A separable grouped convolutional network and a stacked multi-channel attention module, as illustrated in Fig.~\ref{DNN_architecture}.

\subsubsection{Separable Grouped Convolutional Network}
This component processes IQ symbols of the input $\hat{\mathbf{H}}$ by reshaping it as $\mathbb{R}^{B \times 2MN \times L \times K}$. Mirror padding is first applied along $(L,K)$. The initial regular convolution is followed by multiple grouped convolutions to extract local features efficiently \cite{Chollet2016XceptionDL}. Each grouped convolution is followed by batch normalization and GELU activation. The number of groups is set as the minimum of input and output channels.

\subsubsection{Stacked Multi-Channel Attention}
This module captures correlations in both the temporal and frequency domains through self and cross attention mechanisms. Its structure follows the multi-channel attention framework in \cite{vahapoglu_transformer2025}. The only difference is the replacement of dense attention with our proposed sparse attention mechanism to reduce complexity while maintaining connectivity.

After stacked multi-channel attention module, two regular convolutional layers are used to produce the final beamforming weights $\mathbf{W}_{nn}$.

\subsection{Training Procedure}
The model is trained in an unsupervised fashion to maximize the sum-rate across all UEs. The loss is defined as
\begin{align} \label{loss_function}
    \mathcal{L}(\bm{\theta};\hat{\mathbf{H}}) = -\sum_{i=1}^N \alpha_i \log(1 + \gamma_i),
\end{align}
where $\bm{\theta}$ denotes network parameters and $\gamma_i$ is the SINR computed from the network output $\mathbf{W}_{nn} = f(\bm{\theta}; \hat{\mathbf{H}})$. Note that $\gamma_i$ depends on both the ground-truth channel $\mathbf{H}$ and estimated channel $\hat{\mathbf{H}}$, making the model robust to channel estimation errors.

For benchmarking, we compare $\mathbf{W}_{nn}$ against zero-forcing (ZF) and MMSE beamformers
\begin{align}
    \mathbf{W}_{zf} &= \left(\hat{\mathbf{H}}^H\hat{\mathbf{H}}\right)^{-1}\hat{\mathbf{H}}^H, \label{zfbf_formula} \\
    \mathbf{W}_{mmse} &= \left(\hat{\mathbf{H}}^H\hat{\mathbf{H}}+\sigma^2 \mathbf{I}_N \right)^{-1} \hat{\mathbf{H}}^H. \label{mmse_formula}
\end{align}

\section{Doppler-Aware Sparse Attention Mechanism}

\subsection{Proposed Sparsification Structure}

We propose a structured sparsification for multi-head attention, tailored for 2D time-frequency embeddings such as OFDM resource grids. This design, referred to as \textit{Doppler-aware sparse attention} ensures full connectivity in the attention map within at most $p$ hops, where $p$ is the number of heads.

Although the proposed attention pattern operates on embedded representations rather than raw CSI values, these embeddings are produced by a separable grouped convolutional network, which is subsequently followed by positional encoding prior to the initial transformer block of Stacked Multi-Channel Attention module, as illustrated in   Fig.~\ref{DNN_architecture}. The separable grouped convolutional network maintains local time-frequency dependencies by functioning over the $(L,K)$ grid with spatially localized kernels. It ensures that embeddings are captured from local variations in OFDM symbols and subcarriers. Meanwhile, grouped convolutions along the antenna-stream dimension, as a function of $MN$, yield separate isolated spatial streams prior to their projection into a common embedding space.  This architectural design facilitates more interpretable and structured feature extraction in accordance with the wireless physical layer. Moreover, spatial indexing over $(L,K)$ is maintained through positional encoding, allowing the attention mechanism to distinguish based on their time-frequency positions. Consequently, the implementation of structured sparsity pattern is both interpretable and physically motivated, providing precise control over local and global receptive fields in the 2D attention space.

In the Doppler-aware sparse attention, global head ($h=0$) applies fixed strided attention with stride $s=\lceil T^{1-1/p} \rceil$ over the flattened 1D sequence of $T=L \cdot K$ tokens. Each token attends to all other tokens at positions offset by stride $s$ from its own modulo class
\begin{align}
    \mathbf{A}_0[i,j] =
    \begin{cases}
    1 & \text{if } j \equiv i \mod s, \\
    0 & \text{otherwise}.
    \end{cases}
\end{align}

The remaining heads ($h=1,\ldots,p-1$) employ distinct strides $(\mathrm{stride}^{(l)}_h, \mathrm{stride}^{(k)}_h)$ across time and frequency embeddings, enabling attention over time-frequency patterns that selectively capture temporal and spectral variations in embedding representations
\begin{align}
    \textrm{stride}^{(k)}_h &= \max\left(1, \left\lfloor \frac{s}{\lambda^h} \right\rfloor\right), \quad \textrm{frequency stride}, \\
    \textrm{stride}^{(l)}_h &= \max\left(1, \left\lfloor \frac{s}{\mathrm{stride}^{(k)}_h} \right\rfloor\right), \quad \textrm {time stride},      
\end{align}
where $\lambda$ is \textit{time bias} parameter, chosen based on the channel's selectivity (e.g., Doppler spread), and treated as a tunable design parameter.

For each query token $(l_q, k_q)$, with flattened query index $i=l_q \cdot K + k_q$, the attention span of head $h$ is constructed using strides $(\mathrm{stride}^{(l)}_h, \mathrm{stride}^{(k)}_h)$. Key positions $(l, k)$ are selected on the 2D grid, starting from offset positions $(\delta^{(l)}_h, \delta^{(k)}_h)$ and advanced by the corresponding strides $(\mathrm{stride}^{(l)}_h, \mathrm{stride}^{(k)}_h)$. Key positions are flattened as $j=l\cdot K+k$ to define key token indices, attended by given query token $i$ 
\begin{align}
    \delta^{(l)}_h &= (2h + i \bmod \mathrm{stride}^{(l)}_h) \bmod \mathrm{stride}^{(l)}_h, \\
    \delta^{(k)}_h &= (3h + i \bmod \mathrm{stride}^{(k)}_h) \bmod \mathrm{stride}^{(k)}_h. 
\end{align}

\FloatBarrier
\begin{figure}[t!]
    \vspace*{10pt}
    \centerline{\includegraphics[width= 1\linewidth]{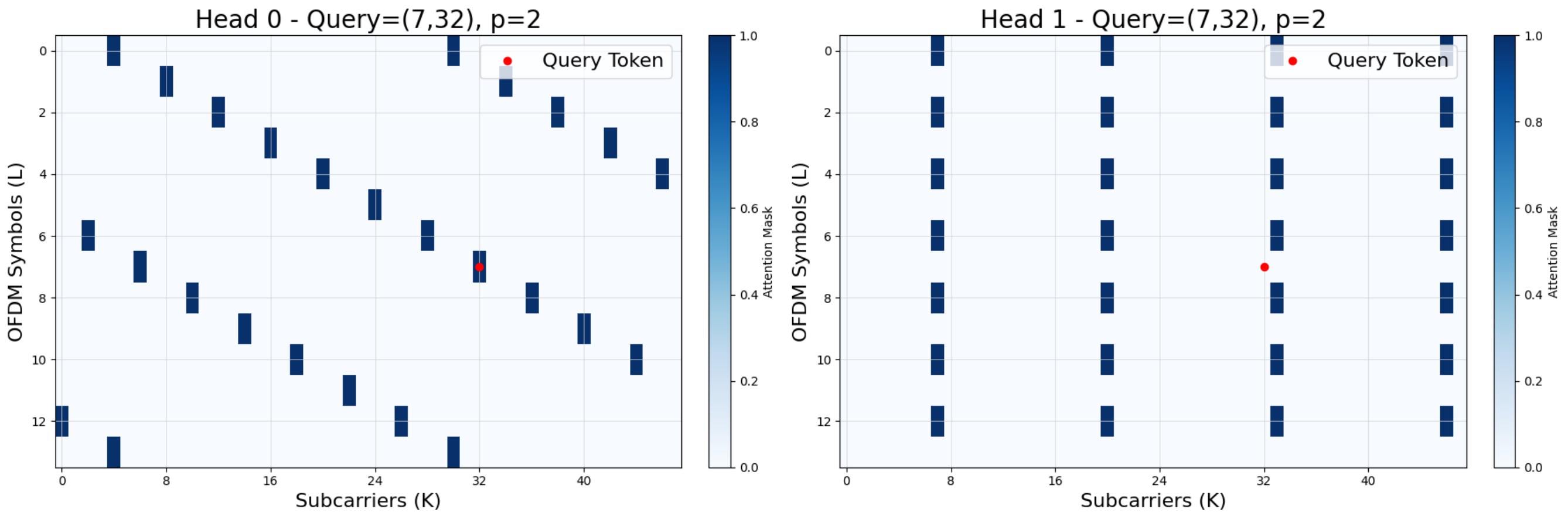}}
    \caption{Doppler-aware sparsification structure for a given query index $(l_q,k_q)=(7,32)$ when the number of heads $p$ is $2$.}
    \label{doppler-aware_sparse_pattern}
\end{figure}

\begin{figure}[t!]
    \centerline{\includegraphics[width= 1\linewidth]{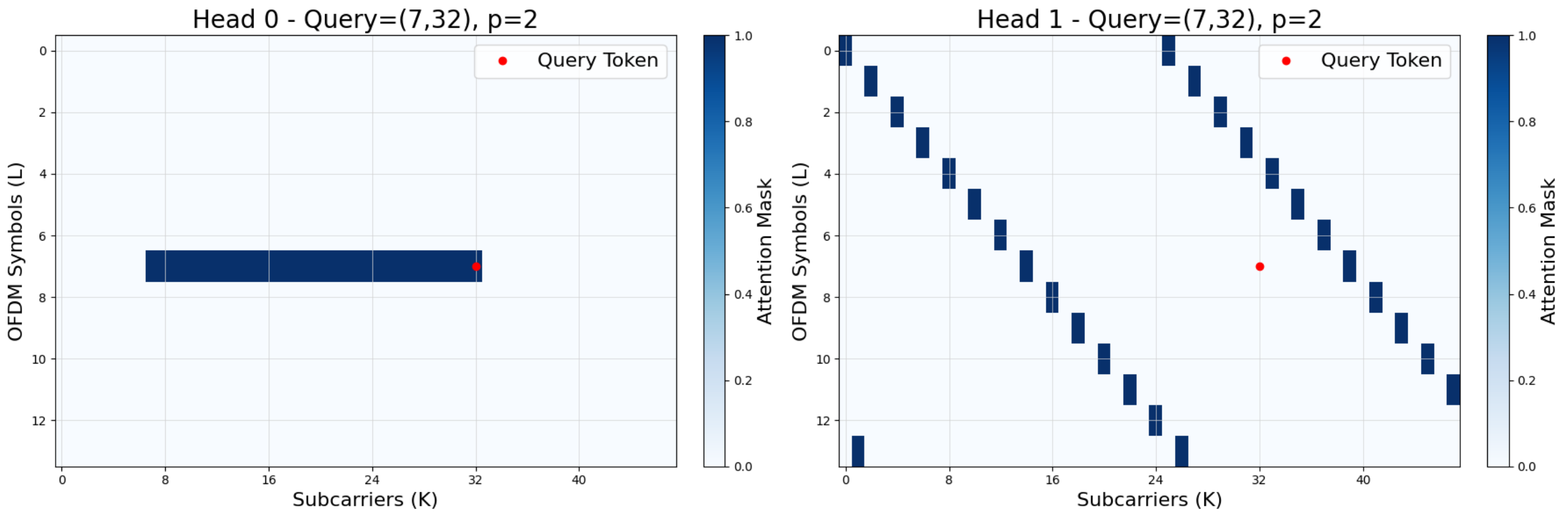}}
    \caption{Fixed strided sparsification structure \cite{child2019} for a given query index $i= 7\cdot 48 + 32$ when number of heads $p$ is 2 and fixed stride $s$ is $\ceil{T^{1-\frac{1}{p}}}$.}
    \label{fixed_strided_ref_sparsification}
\end{figure}

Proposed sparsification pattern for $p=2$ is illustrated in Fig.~\ref{doppler-aware_sparse_pattern} while fixed strided sparsification pattern \cite{child2019} is shown in Fig.~\ref{fixed_strided_ref_sparsification} for clearer comparison. The overall sparsification technique is shown in Algorithm~\ref{alg: doppler-aware sparse masks} via sparse masking design.
\begin{algorithm}[t!]
\caption{Build Doppler-Aware Sparse Masks}
\label{alg: doppler-aware sparse masks}
\textbf{Input:} Number of heads $p$, grid dimensions $(L, K)$, time bias factor $\lambda$ \\
\textbf{Output:} Sparse attention masks $\{\mathbf{A}_h\}_{h=0}^{p-1}$
\begin{algorithmic}[1]
\State $T \gets L \cdot K$, $s \gets \lceil T^{1-1/p} \rceil$
\For{$h = 0$ to $p-1$}
    \For{each query $i \in \{0, \dots, T-1\}$}
        \State $(l_q, k_q) \gets (\lfloor i / K \rfloor, i \bmod K)$
        \If{$h == 0$} \Comment{Global strided head}
            \State $r \gets i \bmod s$
            \State $\mathbf{A}_h[i, j] \gets 1$ for all $j$ s.t. $j \equiv r \pmod{s}$
        \Else
            \State $\mathrm{stride}^{(k)}_h \gets \max(1, \lfloor s / \lambda^h \rfloor)$
            \State $\mathrm{stride}^{(l)}_h \gets \max(1, \lfloor s / \mathrm{stride}^{(k)}_h \rfloor)$
            \State $\delta^{(l)}_h \gets (2h + i \bmod \mathrm{stride}^{(l)}_h) \bmod \mathrm{stride}^{(l)}_h$
            \State $\delta^{(k)} \gets (3h + i \bmod \mathrm{stride}^{(k)}_h) \bmod \mathrm{stride}^{(k)}_h$
            \For{$l = \delta^{(l)}_h$ to $L-1$ step $\mathrm{stride}^{(l)}_h$}
                \For{$k = \delta^{(k)}_h$ to $K-1$ step $\mathrm{stride}^{(k)}_h$}
                    \State $j \gets l \cdot K + k$
                    \State $\mathbf{A}_h[i, j] \gets 1$
                \EndFor
            \EndFor
        \EndIf
    \EndFor
\EndFor
\end{algorithmic}
\end{algorithm}

\subsection{Multi-Head Attention Graph}

The attention pattern of each head $h \in \{0, 1, \dots, p{-}1\}$, characterized by binary attention masks $\mathbf{A}_h \in \{0,1\}^{T \times T}$ where $\mathbf{A}_h[i,j] = 1$ indicates that query token $i$ attends to key token $j$ through head $h$, forms a directed attention graph $\mathcal{G}_h = (\mathcal{V}, \mathcal{E}_h)$, $\forall h \in \{0, 1, \dots, p{-}1\}$, where $V = \{0, 1, \dots, T{-}1\}$ represents the tokens and $\mathcal{E}_h = \{(i,j) \mid \mathbf{A}_h[i,j]=1 \}$ denotes the directed edges for head $h$. 
Consequently, each head is associated with an individual layer of edges when the multi-head attention graph corresponds to the union of edges
\begin{align}
    \mathcal{G} &= \bigcup_{h=0}^{p-1} \mathcal{G}_h = \bigcup_{h=0}^{p-1} \left(\mathcal{V},\mathcal{E}_h \right).
\end{align}

\begin{lemma}[Partitioning by Global Head]
\label{lemma:modulo_partition}
Let $s = \lceil T^{1-1/p} \rceil$ denote the stride of the global head ($h = 0$). Then, the attention graph $\mathcal{G}_0$ corresponding to Head 0 partitions the node set $\mathcal{V} = \{0, 1, \dots, T{-}1\}$ into $s$ disjoint equivalence classes:
\[
C_r = \left\{ i \in \mathcal{V} \,\middle|\, i \equiv r \!\!\! \mod s \right\}, \quad r = 0, 1, \dots, s{-}1.
\]
Each equivalence class $C_r$ generates a complete subgraph in $\mathcal{G}_0$. There are no edges between nodes of distinct classes, i.e., $\mathcal{G}_0$ contains no inter-class connections.
\end{lemma}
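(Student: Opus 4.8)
The plan is to unpack the definition of the global head's attention mask $\mathbf{A}_0$ and show directly that the edge relation it induces is exactly the equivalence relation ``$i \equiv j \bmod s$''. First I would observe that by construction $\mathbf{A}_0[i,j] = 1$ if and only if $j \equiv i \bmod s$, so the out-neighborhood of any node $i$ is precisely the residue class $C_{r}$ with $r = i \bmod s$. Since congruence modulo $s$ is reflexive, symmetric, and transitive, the sets $C_0, \dots, C_{s-1}$ are well-defined, pairwise disjoint, and cover $\mathcal{V}$; this is the standard fact that $\mathbb{Z}/s\mathbb{Z}$ partitions the integers, restricted to the range $\{0,\dots,T-1\}$.

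Next I would verify the two structural claims. For completeness of each class: take any $i, i' \in C_r$; then $i' \equiv i \bmod s$, so $\mathbf{A}_0[i,i'] = 1$, meaning the edge $(i,i')$ is present; since this holds for every ordered pair (including $i = i'$), the induced subgraph on $C_r$ is complete (indeed with self-loops). For the absence of inter-class edges: if $i \in C_r$ and $j \in C_{r'}$ with $r \neq r'$, then $j \not\equiv i \bmod s$, hence $\mathbf{A}_0[i,j] = 0$ and $(i,j) \notin \mathcal{E}_0$; the same argument applies to $(j,i)$. Therefore $\mathcal{G}_0$ decomposes as a disjoint union of $s$ cliques, one per residue class.

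The only genuine subtlety — hardly an obstacle, but worth a remark — is that the number of classes is exactly $s$ rather than fewer: this requires $s \le T$, so that every residue $r \in \{0,\dots,s-1\}$ is actually attained by some node in $\{0,\dots,T-1\}$. This is immediate since $s = \lceil T^{1-1/p}\rceil \le \lceil T \rceil = T$ for $p \ge 1$ and $T \ge 1$ (one should also note $s \ge 1$, so the statement is non-vacuous). I would close by remarking that all the arithmetic here is elementary, so the proof is essentially a direct translation of the mask definition into graph-theoretic language; the real work of the paper is in the subsequent lemma(s) that combine $\mathcal{G}_0$ with the remaining heads to establish $p$-hop connectivity of $\mathcal{G}$.
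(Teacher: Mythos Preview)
Your proposal is correct and follows essentially the same approach as the paper: both argue directly from the definition $\mathbf{A}_0[i,j]=1 \iff j\equiv i \bmod s$ to obtain intra-class completeness and the absence of inter-class edges. Your version is slightly more thorough (explicitly invoking that congruence is an equivalence relation and checking $s\le T$ so that all $s$ residues are realized), but the underlying argument is identical.
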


\begin{proof}
    By construction of Algorithm~\ref{alg: doppler-aware sparse masks}, token $i$ attends to tokens $j$ that meets the condition $j \equiv i \bmod s$ under Head 0. As a result, for each class $C_r$, any $i, j \in C_r$ satisfies $i \equiv j \bmod s$ and are mutually accessible, establishing a complete subgraph. Conversely, if $i \in C_r$ and $j \notin C_r$, then $j \not\equiv i \bmod s$, resulting in $\mathbf{A}_0[i,j] = 0$. Therefore, there are no inter-class edges in $\mathcal{G}_0$.
\end{proof}

\begin{theorem}[Full Connectivity with Global Head]
\label{thm:full_connectivity}
Let $\mathcal{G}_0, \dots, \mathcal{G}_{p-1}$ be the attention graphs generated by $p$ attention heads over a sequence of $T = L \cdot K$ tokens organized in a $L \times K$ 2D grid. Suppose that the global head, denoted as Head $h = 0$, apply fixed strided attention with stride $s = \lceil T^{1-1/p} \rceil$, resulting in $s$ disjoint equivalence classes $C_r = \left\{ i \in \mathcal{V} \,\middle|\, i \equiv r \!\!\! \mod s \right\}, \, r = 0, 1, \dots, s-1$.

Then, the union graph $\mathcal{G} = \bigcup_{h=0}^{p-1} \mathcal{G}_h$ is fully connected, i.e., there exists a path between every pair of tokens in at most $p$ hops, provided that there exists at least one head $h \in \{1, \dots, p-1\}$ using strides $(\mathrm{stride}^{(l)}_h, \mathrm{stride}^{(k)}_h) \in \mathbb{N}^2$ such that
    \[
        \gcd\left( \gcd(\mathrm{stride}^{(l)}_h \cdot K,\ \mathrm{stride}^{(k)}_h),\ s \right) = 1
    \]
\end{theorem}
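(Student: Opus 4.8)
The plan is to combine the clique decomposition of $\mathcal{G}_0$ supplied by Lemma~\ref{lemma:modulo_partition} with a single ``mixing'' head to build, between any ordered pair of tokens, a directed path of length $2$ -- which is at most $p$ since the hypothesis forces $p \ge 2$. Recall from Lemma~\ref{lemma:modulo_partition} that under Head~$0$ the vertex set is the disjoint union of the $s$ cliques $C_0,\dots,C_{s-1}$, with $C_r=\{i\in\mathcal{V}: i\equiv r\bmod s\}$, so any two tokens in a common class are already joined by one edge of $\mathcal{G}_0$. It therefore suffices to show that from every token one can reach, in a single edge of some head $h\ge 1$, a token lying in the class $C_r$ of a prescribed residue $r$.

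Fix a head $h^\star\in\{1,\dots,p-1\}$ for which the hypothesis holds and set $g_1=\mathrm{stride}^{(l)}_{h^\star}\cdot K$ and $g_2=\mathrm{stride}^{(k)}_{h^\star}$, so that $\gcd(\gcd(g_1,g_2),s)=1$. By Algorithm~\ref{alg: doppler-aware sparse masks}, a query token $a$ attends through head $h^\star$ exactly to the tokens $j=(\delta^{(l)}_{h^\star}+a'\,\mathrm{stride}^{(l)}_{h^\star})K+(\delta^{(k)}_{h^\star}+b'\,\mathrm{stride}^{(k)}_{h^\star})$ as $a',b'\ge 0$ run over their admissible ranges; reducing modulo $s$, these indices satisfy $j\equiv c_0+a'g_1+b'g_2 \pmod s$ with $c_0=\delta^{(l)}_{h^\star}K+\delta^{(k)}_{h^\star}$. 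The key step is the claim that, as $a',b'$ sweep their ranges, $a'g_1+b'g_2\bmod s$ runs over the entire subgroup of $\mathbb{Z}/s\mathbb{Z}$ generated by $g_1$ and $g_2$; that subgroup is $\gcd(\gcd(g_1,g_2),s)\,\mathbb{Z}/s\mathbb{Z}$, which by the hypothesis equals all of $\mathbb{Z}/s\mathbb{Z}$. Since $c_0+\mathbb{Z}/s\mathbb{Z}=\mathbb{Z}/s\mathbb{Z}$ for every $c_0$, it follows that the head-$h^\star$ out-neighborhood of \emph{any} token $a$ meets \emph{every} class $C_r$.

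With this in hand the theorem is immediate: given tokens $a,b$, set $r=b\bmod s$ and pick $j$ with $\mathbf{A}_{h^\star}[a,j]=1$ and $j\equiv r\bmod s$; then $j,b\in C_r$, so $\mathbf{A}_0[j,b]=1$ by Lemma~\ref{lemma:modulo_partition}, and $a\to j\to b$ is a directed path of length $2$ in $\mathcal{G}=\bigcup_{h=0}^{p-1}\mathcal{G}_h$. As the hypothesis requires $\{1,\dots,p-1\}\neq\varnothing$, we have $p\ge 2$, so every pair is connected within $p$ hops; the looser bound ``$p$ hops'' also accommodates the general design in which no single head satisfies the gcd condition and several heads must jointly supply the mixing.

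The step I expect to be the main obstacle is the ``sweeping'' claim: over unbounded ranges of $a',b'$ it is the elementary fact that $\langle g_1,g_2\rangle$ in $\mathbb{Z}/s\mathbb{Z}$ has order $s/\gcd(\gcd(g_1,g_2),s)$, but the loops in Algorithm~\ref{alg: doppler-aware sparse masks} only offer $a'$ in a window of length about $L/\mathrm{stride}^{(l)}_{h^\star}$ and $b'$ in one of length about $K/\mathrm{stride}^{(k)}_{h^\star}$. One must therefore check that these windows are long enough -- roughly $\ge s/\gcd(g_1,s)$ attended rows and $\ge s/\gcd(g_2,s)$ attended columns -- to realize the full cyclic orbits of $g_1$ and $g_2$, and also dispose of the degenerate cases where the stride formulas clamp to $1$ or exceed $L$ or $K$. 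This amounts to a mild ``grid large enough'' proviso on $L,K$ relative to $s$ and the strides, which I would state explicitly before running the argument above.
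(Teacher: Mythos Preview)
Your proposal follows essentially the same route as the paper's proof sketch: use Lemma~\ref{lemma:modulo_partition} for intra-class cliques via Head~$0$, then use a single head $h^\star$ satisfying the gcd hypothesis to bridge classes, obtaining a $2$-hop path $a\to j\to b$; the paper phrases the bridging step via the effective step $P_h=\gcd(\mathrm{stride}^{(l)}_h K,\mathrm{stride}^{(k)}_h)$ and the solvability of $i+tP_h\equiv r'\pmod s$, which is exactly your subgroup-of-$\mathbb{Z}/s\mathbb{Z}$ argument in different clothing. Your treatment is in fact slightly more careful: you track the algorithm's actual offsets $\delta^{(l)}_{h^\star},\delta^{(k)}_{h^\star}$ (the paper simplifies to $l=l_q+n\,\mathrm{stride}^{(l)}_h$, $k=k_q+m\,\mathrm{stride}^{(k)}_h$), and you explicitly flag the bounded-range issue for $a',b'$ that the paper's sketch leaves implicit.
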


\begin{proof}[Proof Sketch]

We consider graph structure in two steps: intra-class and inter-class connectivity. Then, theory of linear congruences is utilized to prove connectivity guarantee.

\paragraph{Intra-class connectivity} By Lemma~\ref{lemma:modulo_partition}, the global head $\mathcal{G}_0$ partitions the node set $\mathcal{V}$ into $s$ disjoint equivalence classes $C_0, C_1, \ldots, C_{s-1}$, where each class forms a complete subgraph and no edges exist between different classes.

\paragraph{Inter-class bridging} Let head $h$, $h\geq1$, use strides $(\mathrm{stride}^{(l)}_h, \mathrm{stride}^{(k)}_h)$. For a query token at $(l_q, k_q)$, the attended keys $(l, k)$ satisfy
\[
l = l_q + n \cdot \mathrm{stride}^{(l)}_h,\quad
k = k_q + m \cdot \mathrm{stride}^{(k)}_h,  \quad n, m \in \mathbb{Z}_{\geq 0}.
\]
This maps to flattened key indices as follows.
\[
j = l \cdot K + k = \underbrace{l_q K + k_q}_{i} + n \cdot \mathrm{stride}^{(l)}_h \cdot K + m \cdot \mathrm{stride}^{(k)}_h.
\]
Consequently, the set of attention offsets with respect to query index $i$ is
\[
\mathcal{S}_h = \left\{ n \cdot \mathrm{stride}^{(l)}_h \cdot K + m \cdot \mathrm{stride}^{(k)}_h \mid n, m \in \mathbb{Z}_{\geq 0} \right\}.
\]
By defining the effective flattened step as,
\[
P_h = \gcd\left( \mathrm{stride}^{(l)}_h \cdot K,\ \mathrm{stride}^{(k)}_h \right).
\]
Then, $\mathcal{S}_h = \{ t \cdot P_h \mid t \in \mathbb{Z}_{\geq 0} \}$.

\paragraph{Inter-Class Connectivity via Linear Congruence}
Assume $\gcd(P_h,\ s) = 1$. For any $i \in \mathcal{C}_r$, consider the set of indices reachable via $t \cdot P_h$ steps
\[
i + t \cdot P_h \mod T.
\]
Then, for each $r' \in \{0, \dots, s-1\}$, there exists $t$ such that
\[
i + t \cdot P_h \equiv r' \mod s.
\]
This follows directly from the existence of solutions of linear congruences (see Theorem~4.7 in \cite{burton2007number}).

Therefore, head $h$ bridges all equivalence classes, ensuring that tokens from different $\mathcal{C}_r$ can be reached.
\end{proof}

\section{Experiments}
In our experiments, we evaluate the performance of NNBF under varying UE mobility conditions to simulate different Doppler effects. Our proposed approach, referred to as \textit{Doppler-aware sparse NNBF}, is compared against NNBF using a fixed strided attention mechanism \cite{child2019}, denoted as \textit{standard sparse NNBF}, as well as baseline methods ZFBF and MMSE beamforming. Spectral efficiency (in bps/Hz) and BLER are used as performance metrics to assess throughput and communication reliability.

\subsection{System and Training Specifications}

\FloatBarrier
\begin{table} [t]
    \vspace*{10pt}
    \begin{center}
    \resizebox{\columnwidth}{!}{%
    \begin{tabular}{| c | c |}
    \hline
    \textbf{Parameter} & \textbf{Value} \\ \hline
    Number of resource blocks (RBs) & 4 (48 subcarriers)  \\ \hline
    Maximum Doppler shift $f_d$ & 1040 Hz \\ \hline
    UE velocity $[v_{\min}, v_{\max}] \mathrm{(m/s)}$ & [0,10],[30,40] \\ \hline
    Carrier frequency $f_c$ & 2.6 GHz \\ \hline
    Subcarrier spacing & 30 kHz \\ \hline
    Transmission time interval (TTI) & 500 $\mu s$ \\ \hline
    Coding rate & $\frac{3}{4}$ \\ \hline
    Modulation scheme & 16QAM \\ \hline
    Training SNR & [-10,20] dB   \\\hline
    Learning rate & $[10^{-5}, 10^{-2}]$ \\ \hline
    $\alpha_{\textrm{la}}$ & 0.5 \\ \hline
    k & 13 \\ \hline
    Minimum training SNR ranges & [15,20],[10,15],[5,10],[0,5],[-10,0] \\ \hline
    \end{tabular}}
    \end{center}
    \caption{System \& training parameters.}
    \label{table:system parameters}
\end{table}

Experiments are conducted with $2\times8$ antenna configurations, denoted as $N \times M$. Models are trained over a broad SNR range of $[-10, 20]$ dB to cover both low and high SNR operating regimes. Channel responses are generated using the UMa channel model in the NVIDIA Sionna library \cite{sionna}, following the 3GPP TR 38.901 specifications \cite{3gppTR38901}.

For training, hyperparameter optimization is performed using Optuna \cite{optuna_2019} across various optimizers $\{$Adam, AdamW, RAdam, RMSprop, Adagrad, Adadelta$\}$ and learning rate schedulers $\{$ReduceLROnPlateau, CosineAnnealing, CosineAnnealingWarmRestarts, ExponentialLR, CyclicLR$\}$. The Lookahead optimizer is also employed to enhance convergence and training stability, with $k = 13$ update steps and an interpolation coefficient of $\alpha_{\textrm{la}} = 0.5$. Specifically, the fast weights $\bm{\theta}^{\textrm{fast}}_t$ are updated for $k$ steps using the base optimizer, after which the slow weights are updated as
\begin{align}
    \bm{\theta}^{\textrm{slow}}_t = \bm{\theta}^{\textrm{slow}}_{t-k} + \alpha_{\textrm{la}}(\bm{\theta}^{\textrm{fast}}_t - \bm{\theta}^{\textrm{slow}}_{t-k}).
\end{align}

A curriculum learning strategy is adopted, where training progresses from easier to more challenging tasks by gradually lowering the minimum SNR. The maximum SNR is fixed at 20 dB, while the minimum SNR at each stage is treated as a tunable hyperparameter. All system and training parameters are summarized in Table~\ref{table:system parameters}.

\subsection{Results and Analysis}

Figs.~\ref{fig:low-doppler} and~\ref{fig:high-doppler} present the performance of the proposed Doppler-aware sparse NNBF under low and high UE mobility conditions, respectively. In the low-Doppler scenario $[v_{\min}, v_{\max}] = [0, 10]\,\mathrm{m/s}$, both Doppler-aware and standard sparse NNBF exhibit comparable performance, achieving similar sum-rate and BLER performances as ZFBF and MMSE beamforming. However, under high Doppler conditions $[v_{\min}, v_{\max}] = [30, 40]\,\mathrm{m/s}$, baseline techniques experience significant degradation, while Doppler-aware sparse NNBF outperforms all other methods in both spectral efficiency and BLER. 

These results highlight the importance of designing attention mechanisms that are adaptive to channel dynamics. The performance gap observed in high-mobility scenarios demonstrate that fixed strided attention inadequately captures rapidly changing frequency-time dependencies, whereas Doppler-aware sparsification facilitates more robust feature extraction.

\begin{figure}[t]
    \begin{center}
     	\subfigure[]{%
     	\includegraphics[width=0.49\linewidth]{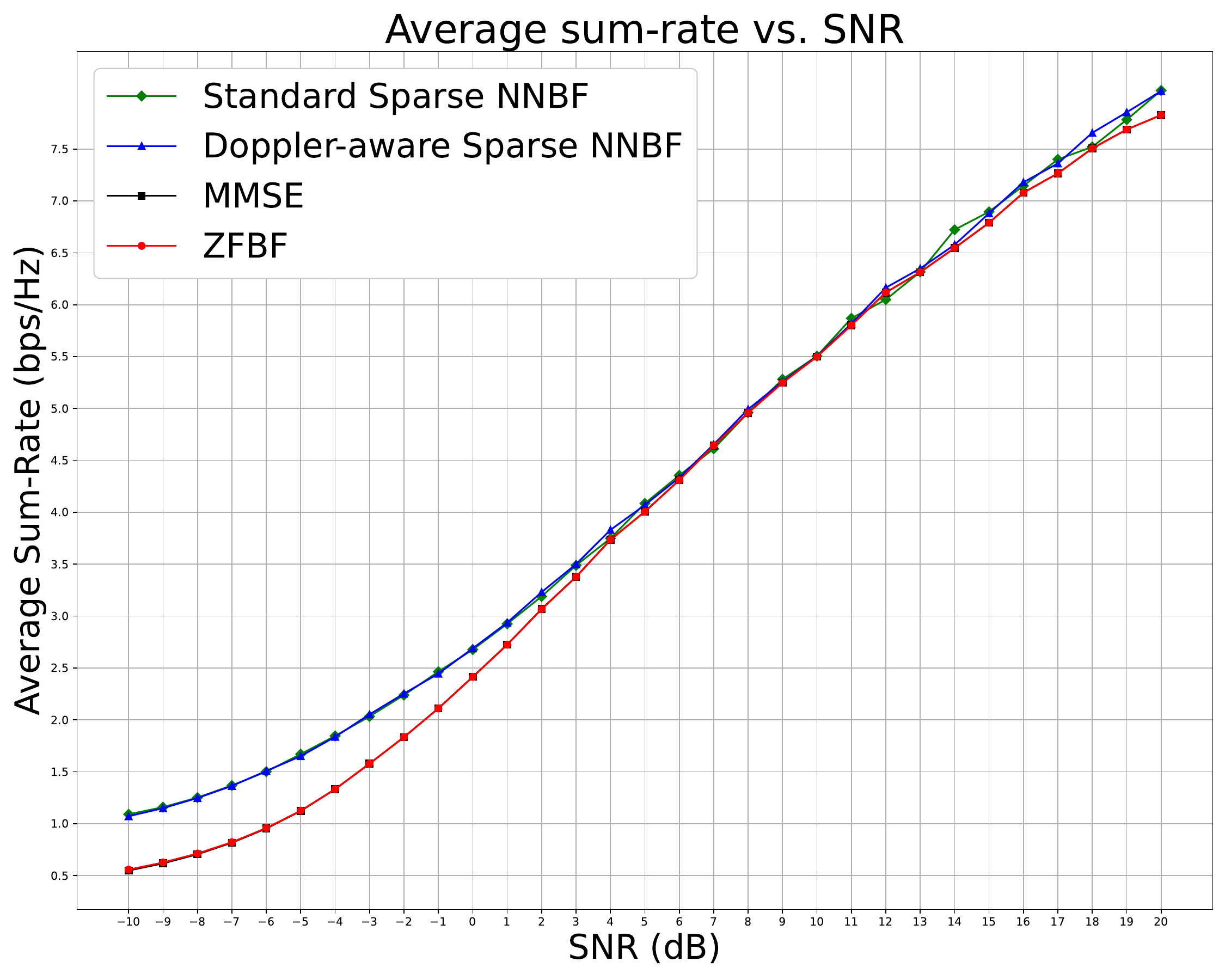}}
     	\subfigure[]{%
     	\includegraphics[width=0.49\linewidth]{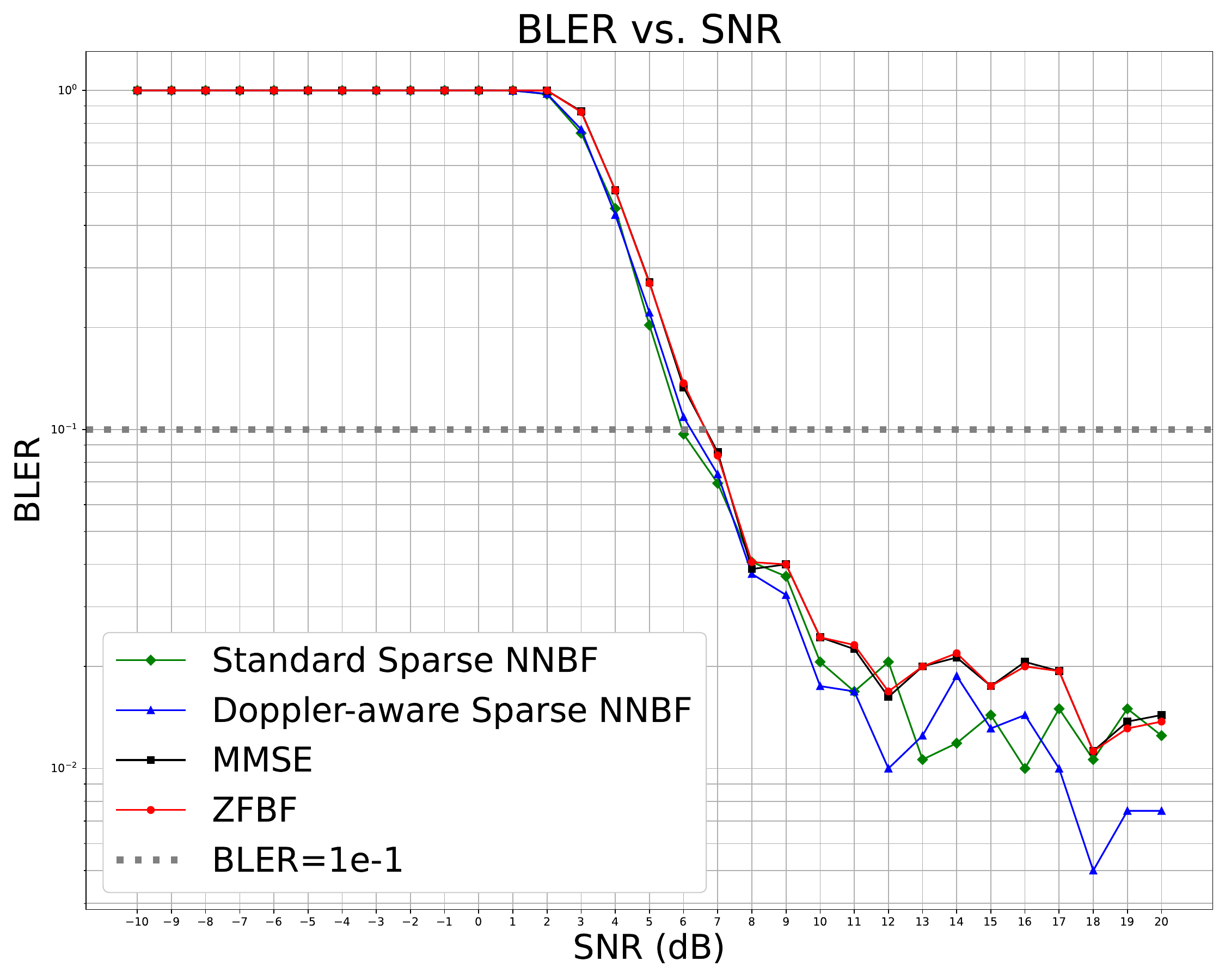}}
    \end{center}
     \caption{Performance comparison under low Doppler conditions \([v_{\min}, v_{\max}] = [0,10]\,\mathrm{m/s}\). Doppler-aware and standard sparse NNBF methods perform similarly and match baseline methods ZFBF and MMSE in both (a) average sum-rate and (b) BLER.} 
    \label{fig:low-doppler}
\end{figure}

\begin{figure}[h!]
    \begin{center}
     	\subfigure[]{%
     	\includegraphics[width=0.49\linewidth]{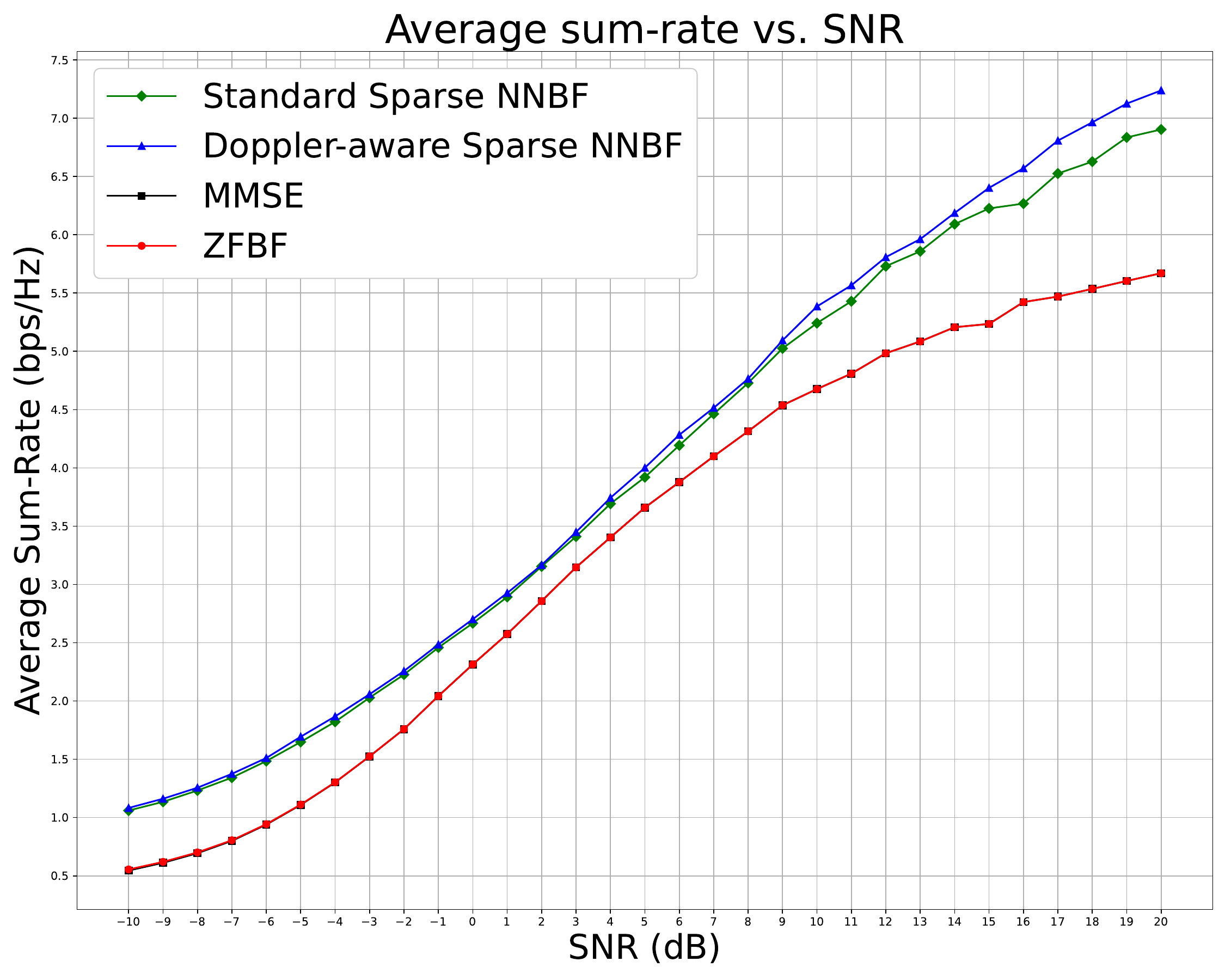}}
     	\subfigure[]{%
     	\includegraphics[width=0.49\linewidth]{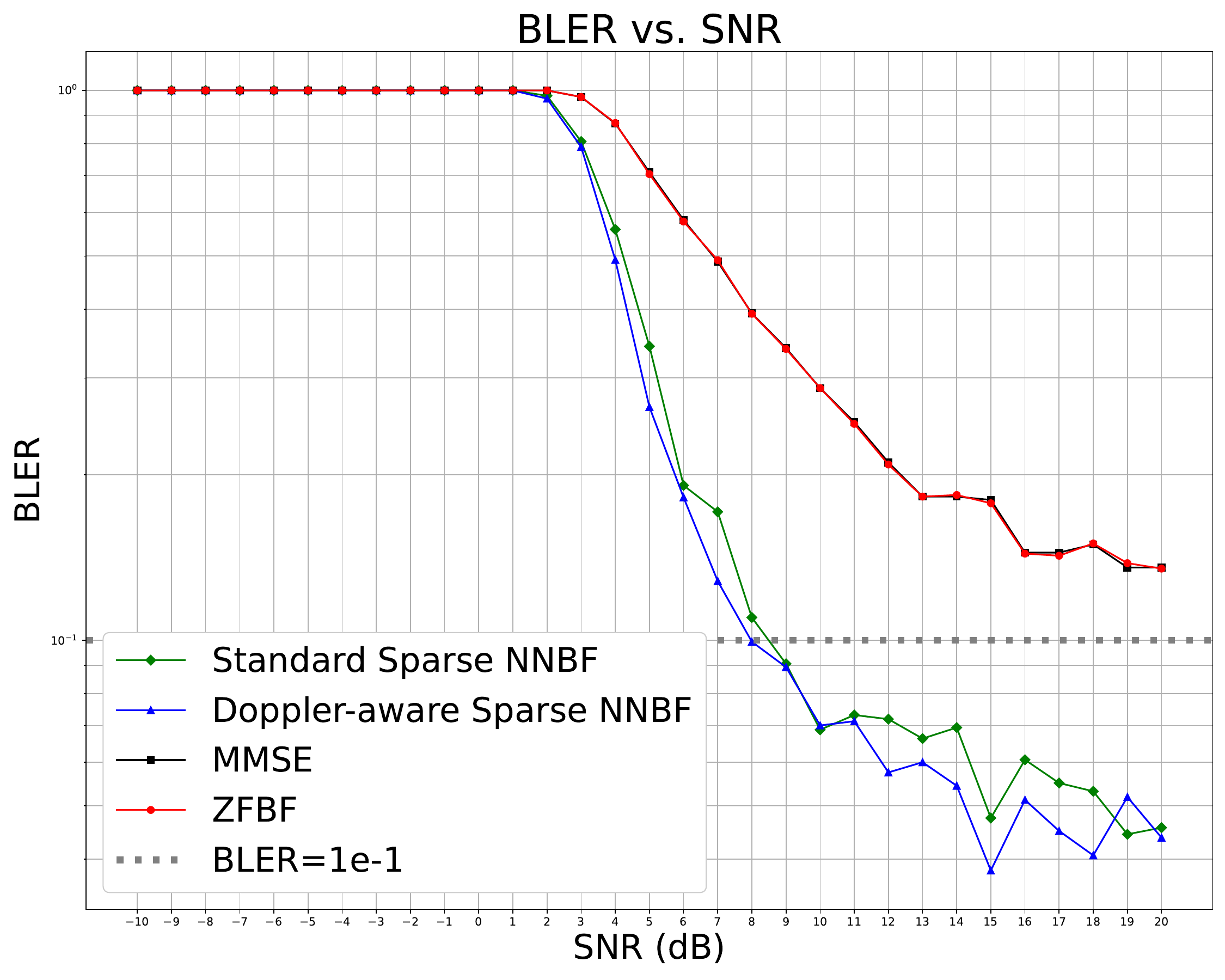}}
    \end{center}
     \caption{Performance comparison under high Doppler conditions \([v_{\min}, v_{\max}] = [30,40]\,\mathrm{m/s}\). Doppler-aware sparse NNBF outperforms standard sparse NNBF and traditional baselines ZFBF and MMSE in both (a) average sum-rate and (b) BLER.} 
    \label{fig:high-doppler}
\end{figure}

\begin{figure}[h!]
    \centerline{\includegraphics[width= 1\linewidth]{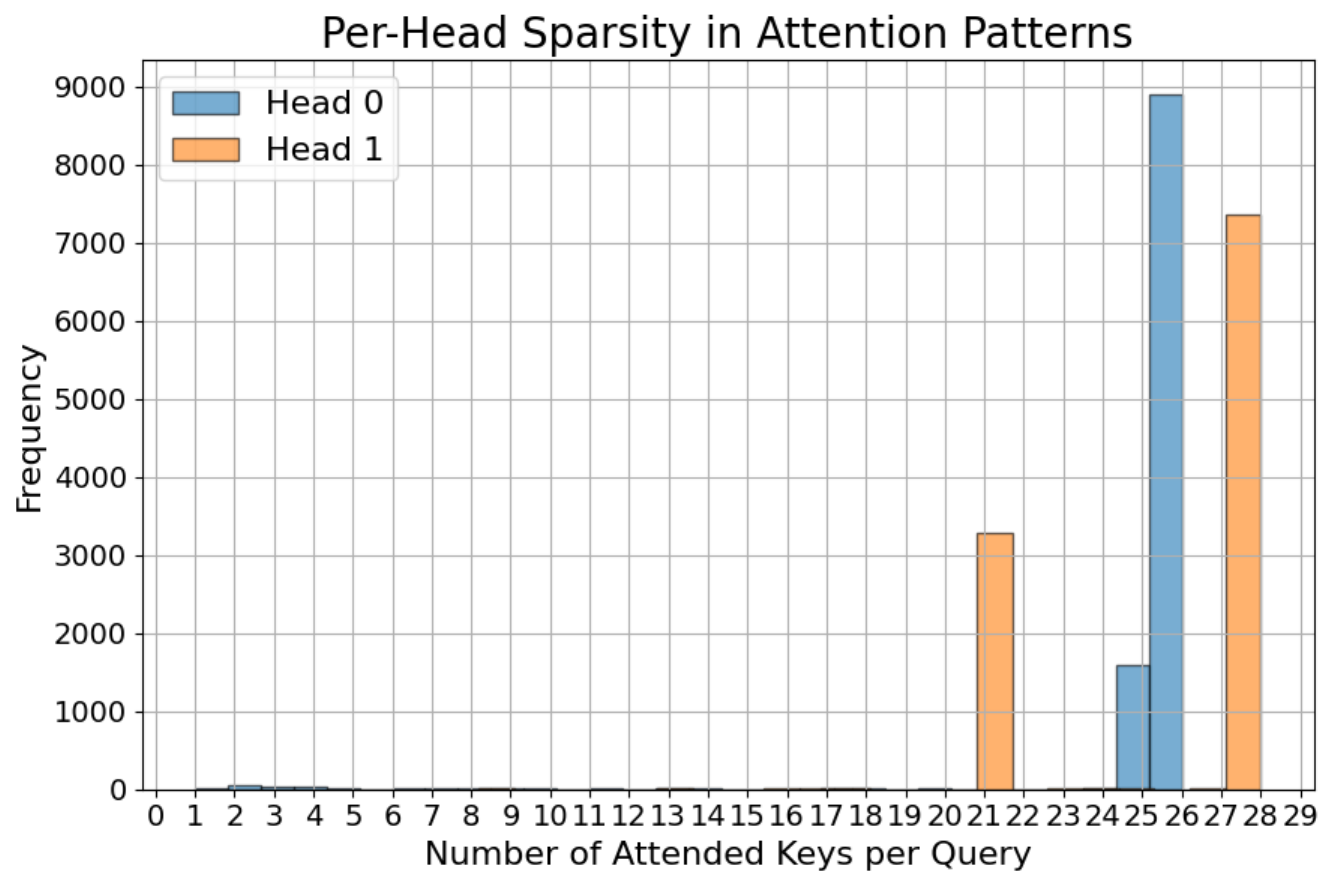}}
    \caption{Histogram of the number of attended keys per query for each head. A total of 10,752 queries are evaluated, corresponding to a batch size of 16 and a sequence length of 672 queries per sample. The sharp peak confirms that each query attends to a fixed or narrow range of keys, consistent with the proposed sparsity pattern.}
    \label{fig:histogram}
\end{figure}

Fig.~\ref{fig:histogram} shows the histogram of the number of attended keys per query for each attention head, based on attention scores saved during training. The distribution verifies that the proposed sparsification strategy maintains a controlled number of active attention connections, ensuring both computational efficiency and full query coverage as intended.

\bibliographystyle{unsrt}
\bibliography{main}

\begin{thebibliography}{10}

\bibitem{delay_performance_multiuserMISO_downlink_imperfectCSI}
S.~Schiessl, J.~Gross, M.~Skoglund, and G.~Caire.
\newblock Delay performance of the multiuser {MISO} downlink under imperfect {CSI} and finite-length coding.
\newblock {\em IEEE Journal on Selected Areas in Communications}, 37(4):765--779, April 2019.

\bibitem{performance_analysis_zf_receiver_imperfectCSI}
V.~D. Nguyen and O.~S. Shin.
\newblock Performance analysis of zf receivers with imperfect {CSI} for uplink massive {MIMO} systems.
\newblock 2016.
\newblock Available online at arXiv:1606.03150.

\bibitem{erpek2020deep}
T.~Erpek, T.~J. O'Shea, Y.~E. Sagduyu, Y.~Shi, and T.~C. Clancy.
\newblock Deep learning for wireless communications.
\newblock 2020.
\newblock Available online at arXiv:2005.06068.

\bibitem{Sun2018}
H.~Sun, X.~Chen, Q.~Shi, M.~Hong, X.~Fu, and N.~D. Sidiropoulos.
\newblock Learning to optimize: Training deep neural networks for interference management.
\newblock {\em IEEE Transactions on Signal Processing}, 66(20):5438--5453, October 2018.

\bibitem{vahapoglu2023deep}
C.~Vahapoglu, T.~J. O'Shea, T.~Roy, and S.~Ulukus.
\newblock Deep learning based uplink multi-user {SIMO} beamforming design.
\newblock In {\em IEEE ICMLCN}, May 2023.

\bibitem{WenchaoMISODownlinkBF}
W.~Xia, G.~Zheng, Y.~Zhu, J.~Zhang, J.~Wang, and A.P. Petropulu.
\newblock A deep learning framework for optimization of {MISO} downlink beamforming.
\newblock {\em IEEE Transactions on Communications}, 68(3):1866--1880, March 2020.

\bibitem{DeepTx2022}
J.~Huttunen, D.~Korpi, and M.~Honkala.
\newblock {DeepTx}: Deep learning beamforming with channel prediction.
\newblock {\em IEEE Transactions on Wireless Communications}, 22(3):1855--1867, March 2023.

\bibitem{child2019}
R.~Child, S.~Gray, A.~Radford, and I.~Sutskever.
\newblock Generating long sequences with sparse transformers.
\newblock {\em arXiv preprint arXiv:1904.10509}, 2019.

\bibitem{longformer2020}
I.~Beltagy, M.~E. Peters, and A.~Cohan.
\newblock Longformer: The long‑document transformer.
\newblock 2020.
\newblock Available online at arXiv:2004.05150.

\bibitem{zaheer2020bigbird}
M.~Zaheer, G.~Guruganesh, A.~Dubey, J.~Ainslie, C.~Alberti, S.~Ontañón, P.~Pham, A.~Ravula, Q.~Wang, L.~Yang, and A.~Ahmed.
\newblock Big bird: Transformers for longer sequences.
\newblock In {\em NeurIPS}, 2020.

\bibitem{oranwg4}
{{O-RAN} Fronthaul Control, User and Synchronization Plane Specification}.
\newblock Technical Report ORAN.WG4.CUS.0-v07.02, O-RAN Alliance, 2023.
\newblock O-RAN WG4 CUS Specification v7.02.

\bibitem{Chollet2016XceptionDL}
F.~Chollet.
\newblock Xception: Deep learning with depthwise separable convolutions.
\newblock In {\em CVPR}, July 2017.

\bibitem{vahapoglu_transformer2025}
C.~Vahapoglu, T.~J. O'Shea, W.~Liu, T.~Roy, and S.~Ulukus.
\newblock Transformer-driven neural beamforming with imperfect csi in urban macro wireless channels.
\newblock 2025.
\newblock Available online at arXiv:2504.11667.

\bibitem{burton2007number}
D.~M. Burton.
\newblock {\em Elementary Number Theory}.
\newblock McGraw-Hill, 6th edition, 2007.

\bibitem{sionna}
J.~Hoydis, S.~Cammerer, F.~{Ait Aoudia}, A.~Vem, N.~Binder, G.~Marcus, and A.~Keller.
\newblock Sionna: An open-source library for next-generation physical layer research.
\newblock {\em arXiv preprint}, March 2022.

\bibitem{3gppTR38901}
{3GPP}.
\newblock {Study on channel model for frequencies from 0.5 to 100 GHz}.
\newblock Technical Report TR 38.901, {3rd Generation Partnership Project (3GPP)}, April 2022.
\newblock Version 17.0.0.

\bibitem{optuna_2019}
T.~Akiba, S.~Sano, T.~Yanase, T.~Ohta, and M.~Koyama.
\newblock Optuna: A next-generation hyperparameter optimization framework.
\newblock In {\em KDD}, 2019.

\end{thebibliography}
\end{document}